\newtheorem{theorem}{Theorem}
\newtheorem{lemma}{Lemma}
\begin{document}
\baselineskip=12pt
\title{Queueing Analysis for Block Fading Rayleigh Channels in the Low SNR Regime}

\author{Yunquan~Dong$^*$~and~Pingyi~Fan$^{*\dag}$~\IEEEmembership{Senior Member,~IEEE}\\
\{dongyq08@mails,~fpy@mail\}.tsinghua.edu.cn\\
$^{*}$Department of Electronic Engineering, Tsinghua University, Beijing, 100084, China;\\
$^{\dag}$National Mobile Communication Research Laboratory, Southeast University, China, 210096.
}

\maketitle

\begin{abstract}

Wireless fading channels suffer from both channel fadings and Additive White Gaussian Noise (AWGN). 
    As a result, it is impossible for fading channels to support a constant rate data stream without using buffers. 
In this paper, we consider information transmission over an infinite-buffer-aided block Rayleigh fading channel in the low signal-to-noise ratio (SNR) regime.
    We characterize the transmission capability of the channel in terms of stationary queue length distribution, packet delay, as well as data rate. 
Based on the memoryless property of the service provided by the channel in each block,  we formulate the transmission process as a discrete time discrete state $D/G/1$ queueing problem.
    The obtained results provide a full characterization of block Rayleigh fading channels and can be extended to the finite-buffer-aided transmissions.

\end{abstract}

\begin{keywords}
block Rayleigh fading channel, buffer-aided communication, queueing analysis, queue length distribution, packet delay.
\end{keywords}

\section{Introduction}
Wireless communication have found more and more applications in recent years, such as 4G cellular network, WLANs, satellite communication and high speed railway communications. Unlike the traditional wireline communications, wireless communications suffer from channel fading besides the Gaussian noise, which leads to the fluctuation of  instantaneous channel capacity and brings great difficulty to the evaluation and utilization of wireless channels. Therefore, characterizing \textit{what kind of service a fading channel can provide} is a key challenge in wireless communications.

The transmission capability of a fading channel is often characterized by ergodic capacity \cite{MS-CWB} or outage capacity \cite{GS-Throughput, ADTse-OutageClsnr, LA-Outage}.
    In particular, ergodic capacity is the statistical average value of the instantaneous capacity, which specifies the maximum transmission rate over a fading channel in a large time scale;
 $\epsilon$-outage capacity is the maximum achievable transmission rate under some outage probability constraint.
    It is clear that $\epsilon$-outage capacity focuses on the worst case and the reliability of communications.
 A part of channel service would be wasted during periods when  channel condition is very good.
    Therefore, it is not enough to describe capability of channels only using data rate.
 To this end, the authors in \cite{DN-EC} proposed a link layer channel model termed effective capacity to jointly consider parameters such as traffic rate, queue length/packet delay violations.
    Note that effective capacity is proposed based on large deviation theory and is accurate for some large buffers and delays.

In fact, wireless communications and queueing theory are connected with each other naturally. First, due to the fluctuation of the instantaneous channel capacity, a buffer must be used at the transmitter to match the source traffic with the channel transmission capability, which is a typical queueing problem. Second, buffers are commonly used in engineering practice.
Recently, Gallager and Berry discussed the optimal power allocation policy under some delay constraints in a finite-buffer aided point-to-point wireless communication system \cite{RG-CoF}. In this paper, we will consider the buffer aided communication over a block rayleigh fading channels in the low SNR (signal-to-noise) regime, using tools from  queueing theory. In a nutshell, to characterize what kind of service the fading channels can provide, we examine the channel with a constant rate data input to the buffer, and describe the channel transmission capability from  aspects such as queue length distribution and packet delay.
Although we have assumed that the buffer is infinitely long, the results can readily be extended to the finite-buffer case.

However, there are some challenges in applying queueing theory to this problem.
    Note that the channel gain varies block by block for fading channels. Thus, the time is discrete for fading communications.
Moreover, the channel gain is a positive real number ranging from zero to infinity.
    Therefore, the input and output process of fading channels would be a \textit{discrete time continuous state} Markov process, for which little references exists.
To this end, one has to resort to other techniques such as the stochastic process method \cite{DF_Overflowrate},  transforming the continuous state space to discrete state space using quantization \cite{DF_Discrete}.

In this paper, we developed another method to transform the discrete time continuous states Markov process into a discrete time discrete state Markov chain.
    The key idea behind the formulation is the memoryless property of exponential distributions.
To be specific, in one block, the service provided by a low-SNR Rayleigh fading channel follows negative exponentially distribution.
     Although a part of the service ability of the block has been consumed by a previous packet, the remaining service capability of this block follows the same distribution as itself, which can be seen as the service ability provided by a new block.
Based on this observation, we denote the service time of a packet as the integer part of its actual service time.
    Thus, the state space of the queue length at discrete epochs (the beginning of blocks) will also be discrete.
That is, the original discrete time continuous state queueing process is transformed in to a discrete time discrete state Markov chain.
    Based on this model, we investigate the stationary distribution of queue length and packet delay in closed forms in this paper.

The rest of this paper is organized as follows. The channel model and the  queueing model formulation is presented in Section \ref{sec:2}.
    The stationary queue length distribution is investigated in \ref{sec:3}, where the it is proved that the stationary distribution of the queueing process at the departure epochs and at the arbitrary epochs are the same.
Average packet delay is obtained in Section \ref{sec:4}. The obtained result is also be presented via numerical results in Section \ref{sec:5}.
    Finally, we concluded our work in section \ref{sec:6}.

\section{System Model and Queueing Formulation}\label{sec:2}
\subsection{The Block Fading Channel Model}

Consider a point-to-point communication over a block-fading Rayleigh channel with additive white Gaussian noise (AWGN). For such a channel, the channel gain stays fixed over each block  and varies independently in different blocks. Let $T_B$ be the block length, $g_n$ be the time varying channel gain during the {\em n}-th block and $\gamma_n=g^2_n$ be the  corresponding power gain. Then their probability density functions (\textit{p.d.f.}) are given by, respectively,
\begin{equation}\label{eq:rayle}\nonumber
        p_g(x)=\frac{1}{\sigma^2}e^{\frac{-x^2}{2\sigma^2}},~   p_\gamma(x)=\frac{1}{2\sigma^2}e^{\frac{-x}{2\sigma^2}}.
\end{equation}

Let $P$ denote the transmit power, $W$ is the limited transmit
bandwidth and $N_0$ is the noise power spectral density. Let $\alpha$
and $d$ denote the path loss exponent and the distance between the
transmitter and receiver respectively. Then the instantaneous
capacity of the block Rayleigh fading channel in $nats$ is {
\begin{equation}\label{df:insC}\nonumber
c_n=W\ln\left(1+\frac{\gamma_nPd^{-\alpha}}{WN_0}\right)
\end{equation} }
and the service provided by the fading channel in one block is $s_n=c_nT_B$. Then the amount of service provided by the fading channel in $k$ successive blocks can be expressed by $S_k=\sum_{m=1}^k s_m$. For the convenience of notation, we also define the average received SNR as $\rho=\frac{2\sigma^2P}{WN_0d^\alpha}$.

By some culculation, we can get the cumulative distribution function (CDF) of the instantaneous capacity $c_n$ as $    F_c(x)=1-e^{\frac{-1}{\rho}\left(e^\frac{x}{W}-1\right)}$,
which will reduce to {
\begin{equation}\label{eq:F_cn_x} \nonumber
    F_c(x)=1-e^{\frac{-x}{W\rho}}
\end{equation} }
in the low SNR scenario considered in this paper.

Then the CDF of the service in one block ($s_n$) is given by  {
\begin{equation}\label{eq:F_sn_x}
    F_s(x)=1-e^{\frac{-x}{\nu}},
\end{equation} }
where $\nu= WT_B\rho$. It can be seen that $s_n$ follows the negative exponential distribution and $S_k$ follows the Gamma distribution, whose {\em p.d.f.}
is given by {
\begin{equation}\label{eq:pdf_Sk}\nonumber
    f_{S}(x)=\frac{1}{\Gamma(k) \nu^k}x^{k-1}e^{\frac{-x}{\nu}},
\end{equation} }
where $\Gamma(k)=\int_0^\infty e^{-t}t^{k-1}dt$ is the Gamma function.

\subsection{The Markov Chain Model}

The transmission capability of the time varying fading channel is examined by a constant rate ($R$) data stream. An infinite length First In First Out (FIFO) buffer is used at the transmitter side to match the source traffic stream with the time varying channel service in each block. Let $Q(n)$ be the length of the queue in the buffer in \textit{nats} at the start of block $n$. Assume that the data are served by packets and the packet size equals to the traffic arriving at the buffer in each block, i.e., $L_p=RT_B$. We use $D(n)$ to denote the time that the packet arriving in block $n$ will spend in the queue. It is seen that the queueing process $Q(n)$ is a discrete time continuous state Markov process. Unfortunately, few results are available for such processes. Thus, some transformations are needed to construct a discrete time discrete state $D/G/1$ queue.

Define the service time of a packet $T_n$ as the integer part of its actual service time, i.e., the number of complete blocks of the period in which the packet is served. Firstly, the probability that a packet is served within one block can be derived as follows,  {
\begin{equation}\label{dr:p_0}\nonumber
    p_0=\Pr\{T_n=0\}=\Pr\{ s_n>L_p \} =e^{-\theta},
\end{equation} }
where $\theta=\frac{L_p}{\nu}$.

Similarly, with $F_s(x)$, and $f_{S_k}(x)$, the probability that a packet will be served in $k$ ($k\geq1$) blocks can be obtained as {
\begin{equation}\label{dr:p_k}
\begin{split}
    p_k=&\Pr\{T=k\}\stackrel{(a)}{=}\Pr\{S_{k}\leq L_p, S_{k+1}> L_p\}\\
    \stackrel{(b)}{=}&\int_0^{L_p} f_{S_{k}}(x_{k})dx_{k} \int_{L_p-x_{k}}^\infty f_{s}(x_{k+1})dx_{k+1}\\
    =&\frac{1}{k!}e^{-\theta}\theta^k,
\end{split}
\end{equation} }
where (b) holds because $S_k$ and $s_{k+1}$ are independent from each other. Most importantly, (a) is assumed to hold in the sense of the memoryless property of negative exponential distributions, which is given by the following Lemma.

\begin{lemma}\label{Lem:memoryless} \cite{Fan-SCT}
    If $X$ is an negative exponential distributed random variable, then $X$ is memoryless, namely, {
    \begin{equation}\nonumber
        \Pr\{X\leq s+t|X>s\}=\Pr\{X\leq t\}.
    \end{equation}}
\end{lemma}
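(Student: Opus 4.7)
The plan is to prove the memoryless identity by direct evaluation of the conditional probability using the closed form CDF of a negative exponential random variable, which has already been identified in (\ref{eq:F_sn_x}) as the distribution of the per-block service. I will first rewrite the conditional probability on the left-hand side using the definition of conditional probability, namely
\begin{equation}\nonumber
\Pr\{X \leq s+t \mid X > s\} = \frac{\Pr\{s < X \leq s+t\}}{\Pr\{X > s\}},
\end{equation}
which is valid whenever $\Pr\{X > s\} > 0$, and this holds for any $s \geq 0$ since an exponential random variable has strictly positive tail probability.

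Next, I would substitute the exponential CDF $F_X(x) = 1 - e^{-x/\mu}$ (with an appropriately named parameter $\mu$, matching $\nu$ in the service-time application) into both the numerator and denominator. The numerator becomes $F_X(s+t) - F_X(s) = e^{-s/\mu} - e^{-(s+t)/\mu}$, and the denominator becomes $1 - F_X(s) = e^{-s/\mu}$. Factoring $e^{-s/\mu}$ out of the numerator, the common factor cancels with the denominator, leaving $1 - e^{-t/\mu}$, which is exactly $\Pr\{X \leq t\}$. This completes the identity.

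There is no serious obstacle here; the only care needed is to state the assumption $s, t \geq 0$ explicitly so that the cancellation in the exponents is justified and the resulting expression is still a valid CDF value. Once the algebraic cancellation is in place, the conclusion is immediate, and the lemma can then be invoked in equality (a) of (\ref{dr:p_k}) to justify restarting the service-time clock at the end of each completed block, because the residual per-block service $s_n - (\text{used portion})$ conditional on the used portion being less than $s_n$ is again exponentially distributed with the same parameter $\nu$.
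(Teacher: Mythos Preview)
Your proof is correct and is the standard direct verification via the exponential CDF. The paper itself does not supply a proof of this lemma at all; it simply states the result with a citation to \cite{Fan-SCT}, so there is nothing to compare against beyond noting that your argument is the textbook one.
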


Therefore, when we say that the service time of a packet is $T_n=k$, the packet is actually completed in $k+1$ blocks,as shown in (\ref{dr:p_k}.a). In this case, only a part of the service capability of block $n+1$ is consumed. So it can continue the service for the next packet in the queue.  As is shown by (\ref{eq:F_sn_x}), the channel service of each block in the low SNR region follows the negative exponential distribution, which is memoryless by Lemma \ref{Lem:memoryless}. Therefore, the remaining service capability of the {\em n}+1-th block can be seen as the service that can be provided by a new block. In this sense, the packet consumes only $k$ blocks. Similarly, if a packet is completed within one block, it is defined that its service time is zero, i.e., $T_n=0$.

In this way, the problem of information transmission over a block fading Rayleigh channel is transformed into a classical discrete time $D/G/1$ queueing problem. Its arrival process is $\{A_n=1,n\geq1\}$ (unit: $L_p$) and its service time $T_n$ (unite: block) is a Poisson distributed random variable, whose probability generating function (PGF) is given by {
\begin{equation}\label{dr:pgf_t}
    G(z)=\textsf{E}[z^{T_n}]=\sum_{k=0}^\infty p_k z^k=e^{\theta(z-1)}
\end{equation} }
and we have $\textsf{E}[T_n]=G'(t)|_{t=0}=\theta$. Throughout the paper, it is assumed that $\theta<1$ so that the queue is stable.

Up to now, the information transmission over the channel is formulated as an late arrival system queueing model with immediate access, in which each packet arrives at the end of a block ($k^-$) and leaves at the beginning of the block ($k^+$). The service of each packet is also assumed to start at the beginning of a block. If the service time of a packet is zero, i.e., $T_n=0$, it is assumed to leave the queue immediately at the beginning of the block following its arrival.

Let $Q_n^+$ be the number of packets in the queue at the arbitrary time of $n^+$. Usually, it is not a Markov chain.
Define $\tau_n$ be the departure epoch of packet $k$ and $\tau_{k+1}=\tau_k$ if the service time of the {\em k}+1-th packet is zero, i.e. $T_{k+1}=0$. Then $\tau_n^+$ must see either an empty buffer or the start of the service of a new packet. So $\tau_n^+$ is the aftereffectless point of the queue length process. Let $L_n^+=Q(\tau_n^+)$ be the number of packets in the buffer after the departure of the {\em n}-th packet. Then $\{L_n^+,n\geq1\}$ is a Markov chain. Since the arrival process is $\{A_n=1,n\geq1\}$, we have {
\begin{equation}\label{df:L_n_+}\nonumber
    L_{n+1}^+=\left\{ \begin{aligned}
             &L_n^+-1+T_{n+1}, & L_n^+\geq1\\
             &T_{n+1}          & L_n^+=0.
             \end{aligned} \right.
\end{equation} }
and the transition probability matrix of $\{L_n^+,n\geq1\}$ is {
\begin{equation}\label{df:P_matri}\nonumber
    \textbf{P}=\left[
  \begin{array}{ccccccccccccccccccccc}
    p_0 & p_1 & p_2  &\cdots \\
    p_0 & p_1 & p_2  &\cdots \\
    0   & p_0 & p_1  &\cdots \\
    \vdots & \vdots & \vdots & \ddots \\
  \end{array}
\right]
\end{equation} }

\section{Queue Length Distribution}\label{sec:3}
In this section, we will perform a detailed investigation on the queue length process in terms of stationary queue length distribution.

\subsection{The Stationary Queue Length at the Departure Epochs}

Let $L^+=\lim_{n\rightarrow\infty}L_n^+$ be the limitation of the queue length process and $\pi_j=\Pr\{L^+=j\},j\geq0$,
then the vector $\vec{\pi}=\{\pi_0,\pi_1,\cdots\}$ is the stationary queue length at the departure epochs.
\begin{theorem}\label{th:sta_distri_dep}
    If $\theta<1$, the PGF of the stationary queue length at the departure epochs is given by:
    \begin{equation}\label{rt:sta_distri_dep}
        L^+(z)=\frac{ (1-\theta)(1-z) }{ 1-ze^{\theta(1-z)} },
    \end{equation}
    where $\theta=\frac{L_p}{\nu}$.
\end{theorem}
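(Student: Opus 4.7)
The plan is to derive the PGF directly from the transition structure of $\{L_n^+\}$ using the standard generating-function technique for an M/G/1-type chain. Since $L^+ = \lim_n L_n^+$ exists under the stability assumption $\theta<1$, the stationary probabilities $\pi_j$ satisfy the balance equations read off from the matrix $\mathbf{P}$:
\begin{equation}\nonumber
\pi_j = \pi_0 p_j + \sum_{i=1}^{j+1} \pi_i \, p_{j-i+1}, \qquad j \geq 0.
\end{equation}
The first term accounts for the special first row of $\mathbf{P}$ (a departure that leaves an empty buffer), while the second sum captures the ordinary decrement-plus-service-time transitions.

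Next I would multiply the balance equation by $z^j$, sum over $j\geq 0$, and recognize the inner convolution as the PGF $G(z)=e^{\theta(z-1)}$ of the service time $T_n$ from (\ref{dr:pgf_t}). Interchanging the order of summation in the double sum and reindexing $k=j-i+1$ gives $\sum_{i\geq 1}\pi_i z^{i-1} G(z) = \frac{G(z)}{z}\bigl(L^+(z)-\pi_0\bigr)$. Collecting terms yields the functional equation
\begin{equation}\nonumber
L^+(z) = \pi_0 G(z) + \frac{G(z)}{z}\bigl(L^+(z)-\pi_0\bigr),
\end{equation}
which I solve algebraically for $L^+(z)$ to get $L^+(z) = \dfrac{\pi_0 (z-1) G(z)}{z-G(z)}$.

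To pin down $\pi_0$, I would impose the normalization $L^+(1)=1$. Since both numerator and denominator vanish at $z=1$, I apply L'H\^{o}pital's rule, using $G(1)=1$ and $G'(1)=\mathsf{E}[T_n]=\theta$, and obtain $\pi_0 = 1-\theta$. This is also consistent with the stability condition and identifies $1-\theta$ as the long-run fraction of empty-queue departures. Finally, multiplying numerator and denominator of the resulting expression by $e^{\theta(1-z)}$ converts $G(z)$ in the numerator to $1$ and $z-G(z)$ in the denominator to $-(1-ze^{\theta(1-z)})$, flipping signs to produce exactly the form (\ref{rt:sta_distri_dep}).

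I expect no deep obstacle; the only mild care needed is the index manipulation in the convolution sum (the special role of $\pi_0$ must not be double-counted because of the first two identical rows of $\mathbf{P}$) and the limit computation at $z=1$, where one must verify that the denominator $z-G(z)$ has a simple, not higher-order, zero so that L'H\^{o}pital applies cleanly; this follows from $1-G'(1)=1-\theta\neq 0$ under the stability hypothesis $\theta<1$.
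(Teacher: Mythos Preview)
Your proposal is correct and follows essentially the same route as the paper: write the stationary balance equations from $\mathbf{P}$, take the $z$-transform to recognize the convolution with $G(z)$, solve the resulting functional equation for $L^+(z)$, and determine $\pi_0=1-\theta$ from the normalization $L^+(1)=1$. Your added remarks about the index bookkeeping for $\pi_0$ and the simple zero of $z-G(z)$ at $z=1$ make explicit what the paper leaves implicit, but there is no substantive difference in method.
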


\begin{proof}
    When $\theta<1$, the queue is stable. According to the classical queueing theory,  $\overrightarrow{\pi}\textbf{P}=\overrightarrow{\pi}$ holds. So we have {
    \begin{equation}\nonumber
        \pi_j=\pi_0p_j+\sum_{i=1}^{j+1}\pi_i p_{j+1-i},~~~j\geq0.
    \end{equation} }

    Multiplying $z^j$ on the both sides of the equation and take the sum, one has {
    \begin{equation}\nonumber
        \begin{split}
            L^+(z)=&\pi_0\sum_{j=0}^\infty p_j z^j + \sum_{j=0}^\infty z^j \sum_{i=1}^{j+1}\pi_i p_{j-i+1}\\
                  =&\pi_0G(z)+\frac{1}{z}[L^+(z)-\pi_0]G(z),
        \end{split}
    \end{equation} }
    where $G(z)$ is the PGF of the service time $T_n$ given by (\ref{dr:pgf_t}).

    Solving $L^+(z)$ from above equation, one can get {
    \begin{equation}\label{eq:L_z}
        L^+(z)=\frac{ \pi_0(1-z)G(z) }{ G(z)-z }=\frac{ \pi_0(1-z) }{ 1-ze^{\theta(1-z)} }.
    \end{equation} }
    By the property of PGF, we have {
    \begin{equation}\nonumber
            1=\lim_{z\rightarrow1}L^+(z)=\frac{\pi_0}{1-\theta}.
    \end{equation} }

    Thus, $\pi_0=1-\theta$. With $\pi_0$ and (\ref{eq:L_z}), \textit{Theorem} \ref{th:sta_distri_dep} is proved.


\end{proof}

\subsection{The Stationary Queue Length at the Arbitrary Epochs}
This subsection deals with the stationary queue length at the arbitrary epochs. Firstly, let's introduce two Lemmas that will be used.

    \begin{lemma}\label{lem:mean}
        $X$ is a discrete random variable of non-negative integers and $\Pr\{X=k\}=p_k$, then
        \begin{equation}\nonumber
            \textsf{E}[X]=\sum_{k=0}^\infty \Pr\{X>k\}.
        \end{equation}
    \end{lemma}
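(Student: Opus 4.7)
The plan is to prove the tail-sum identity for expectation by rewriting the right-hand side as a double sum and swapping the order of summation. Since all terms involved are non-negative, any such exchange is automatically justified (by Tonelli's theorem for counting measure), so no convergence subtlety arises beyond interpreting $\textsf{E}[X]$ as a (possibly infinite) sum.

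Concretely, I would begin by expanding the tail probability as
\begin{equation}\nonumber
    \Pr\{X>k\} = \sum_{j=k+1}^\infty p_j,
\end{equation}
which follows directly from countable additivity applied to the disjoint events $\{X=j\}$ for $j \geq k+1$. Substituting this into the right-hand side of the claimed identity yields the double sum $\sum_{k=0}^\infty \sum_{j=k+1}^\infty p_j$, in which the pair $(k,j)$ ranges over all integers with $0 \leq k < j$. Swapping the order of summation, I fix $j \geq 1$ and let $k$ run from $0$ to $j-1$, which contributes the factor $j$.

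This gives $\sum_{k=0}^\infty \Pr\{X>k\} = \sum_{j=1}^\infty \sum_{k=0}^{j-1} p_j = \sum_{j=1}^\infty j\,p_j$, and the last expression is precisely $\textsf{E}[X]$ by definition (the $j=0$ term vanishes and can be freely included). The only potential obstacle would be justifying the interchange of summations, but because every summand is non-negative, Tonelli's theorem (or the standard rearrangement result for non-negative series) makes the swap unconditional; in particular, the identity holds in the extended sense even when $\textsf{E}[X] = \infty$. Thus the proof reduces to a single bookkeeping step and requires no additional machinery.
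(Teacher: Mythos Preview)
Your proof is correct and follows essentially the same approach as the paper: both arguments write the relevant quantity as a double sum over the index set $\{(k,j):0\le k<j\}$ and interchange the order of summation. The only differences are cosmetic---the paper starts from $\textsf{E}[X]$ and arrives at the tail sum, whereas you go the other way and add an explicit Tonelli justification---so nothing further is needed.
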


    The proof of this lemma is as follows {
        \begin{equation}\nonumber
            \textsf{E}[X]=\sum_{l=1}^\infty \sum_{k=1}^l p_l
            =\sum_{k=1}^\infty \sum_{l=k}^\infty p_l=\sum_{k=0}^\infty \Pr\{X>k\}.
        \end{equation} }

\begin{lemma}\label{lem:pro_pgf}
    $X$ is a discrete random variable of non-negative integers and $G(z)$ is its PGF, then
    \begin{equation}\nonumber
        \sum_{k=0}^\infty z^k \Pr\{X>k\}=\frac{1-G(z)}{1-z}.
    \end{equation}
\end{lemma}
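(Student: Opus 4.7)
The plan is to mimic the double-sum swap used in Lemma~\ref{lem:mean}, but with each level of the tail weighted by $z^k$. First I would substitute the tail representation $\Pr\{X>k\}=\sum_{l=k+1}^{\infty} p_l$ into the left-hand side, obtaining the iterated sum $\sum_{k=0}^{\infty}z^{k}\sum_{l=k+1}^{\infty}p_{l}$. On the region $|z|\le 1$ (which is all that matters for a PGF), the double series is absolutely convergent, so Fubini for sums applies and I may interchange the order to get $\sum_{l=1}^{\infty}p_{l}\sum_{k=0}^{l-1}z^{k}$.

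Next I would evaluate the inner finite geometric sum as $(1-z^{l})/(1-z)$, which produces
$$
\frac{1}{1-z}\sum_{l=1}^{\infty}p_{l}\bigl(1-z^{l}\bigr)
=\frac{1}{1-z}\Bigl[\sum_{l=1}^{\infty}p_{l}-\sum_{l=1}^{\infty}p_{l}z^{l}\Bigr].
$$
Using $\sum_{l=0}^{\infty}p_{l}=1$ and $\sum_{l=0}^{\infty}p_{l}z^{l}=G(z)$, the bracket collapses to $(1-p_{0})-(G(z)-p_{0})=1-G(z)$, giving $(1-G(z))/(1-z)$ as claimed.

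There is no real obstacle; the only point to be careful about is the interchange of summation and the apparent singularity at $z=1$. For $|z|<1$ absolute convergence is automatic, and at $z=1$ both sides reduce (by l'Hôpital or by Abel's theorem on the PGF) to $\mathsf{E}[X]$, matching Lemma~\ref{lem:mean}. So Lemma~\ref{lem:pro_pgf} is simply the $z$-weighted generalization of Lemma~\ref{lem:mean} and the proof is a routine Fubini-plus-geometric-series computation.
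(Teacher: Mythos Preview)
Your proposal is correct and follows exactly the same route as the paper: substitute the tail as $\sum_{l>k}p_l$, swap the order of summation, sum the finite geometric series, and collapse using $\sum p_l=1$ and $\sum p_l z^l=G(z)$. If anything, you are more careful than the paper, which performs the interchange without comment; your remarks on absolute convergence for $|z|<1$ and the limiting case $z=1$ are a welcome addition.
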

\begin{proof} {\small
    \begin{equation}\nonumber
        \begin{split}
            &\sum_{k=0}^\infty z^k \Pr\{X>k\}
            =\sum_{k=0}^\infty z^k \sum_{j=k+1}^k \Pr\{X=j\}\\
            =&\sum_{j=1}^\infty \sum_{k=0}^{j-1} z^k \frac{z-1}{z-1} \Pr\{X=j\}\\
            =&\sum_{j=1}^\infty \frac{z^k-1}{z-1} \Pr\{X=j\}
            =\frac{1-G(z)}{1-z}.
        \end{split}
    \end{equation} }
\end{proof}

Next, the stationary queue length at the arbitrary epochs is specified by the following theorem.
\begin{theorem}\label{th:sta_distri_arbtr}
    The stationary queue length distribution at the departure epochs and arbitrary epochs are the same.
\end{theorem}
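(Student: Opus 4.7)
My plan is to prove Theorem~\ref{th:sta_distri_arbtr} by a short two-step level-crossing argument that bypasses any direct manipulation of the transition matrix $\textbf{P}$ and exploits only the event ordering on a block boundary.

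First, I would observe that the arbitrary-epoch distribution coincides with the distribution seen by an arriving packet. Under the late-arrival-with-immediate-access convention adopted in Section~\ref{sec:2}, no event occurs strictly between $n^+$ and $(n+1)^-$, so the queue length is constant on that open interval. Consequently, the packet arriving at $(n+1)^-$ finds exactly $Q_n^+$ packets already present, and stationarity gives
\begin{equation}\nonumber
\Pr\{Q^+=j\}=\Pr\{\text{arriving packet sees }j\},\qquad j\geq 0.
\end{equation}

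Second, I would invoke the classical level-crossing identity for the integer-valued queue-length process. Each arrival that sees $j$ packets corresponds to one upcrossing of the level $j+\tfrac{1}{2}$, and each individual packet departure that leaves $j$ behind corresponds to one downcrossing of the same level. Since $\theta<1$ makes the queue stable, the arrival and departure rates both equal one per block, so over a long horizon the numbers of up- and down-crossings of any fixed level agree to leading order. Dividing by the common total yields $\Pr\{\text{arriving packet sees }j\}=\Pr\{L^+=j\}$, and chaining with the first step delivers Theorem~\ref{th:sta_distri_arbtr}.

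The subtlest point, and the place on which I would focus the written-out argument, is the accounting of downcrossings when several packets leave at the same epoch -- the situation that arises precisely when consecutive service times $T_n$ equal zero so that $\tau_{k+1}=\tau_k$. A simultaneous queue drop of size $d>1$ at a single boundary must be decomposed into $d$ successive unit downcrossings of consecutive levels, one per departing packet. Because the embedded chain $\{L_n^+\}$ is indexed by individual packets and records precisely those intermediate queue lengths in order, the one-to-one correspondence between downcrossings of level $j+\tfrac{1}{2}$ and departures with $L_n^+=j$ is preserved, and the level-crossing balance -- and hence the theorem -- goes through unchanged.
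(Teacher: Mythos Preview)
Your argument is correct and is genuinely different from the paper's own proof. The paper proceeds constructively: it introduces the queue length $\widetilde{L}_n$ seen at the last departure epoch before $n^+$, computes the mean sojourn time $m_i$ of $\widetilde{L}_n$ in each state via renewal theory, derives the elapsed-sojourn-time distribution $\Pr\{\eta_k=l\}=\frac{1}{m_k}\Pr\{\Delta\tau_k>l\}$, and then conditions on $(\widetilde{L}_n,\eta_k)$ to write $v_j$ explicitly; summing against $z^j$ and invoking Lemma~\ref{lem:pro_pgf} collapses the expression to $V(z)=L^+(z)$. Your route instead uses two structural facts --- that deterministic one-per-block arrivals see exactly the time-average distribution $\{v_j\}$, and that sample-path level crossing forces the fraction of arrivals finding $j$ to equal the fraction of (individually counted) departures leaving $j$ --- so the equality $v_j=\pi_j$ drops out without any generating-function computation. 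The advantage of your approach is brevity and transparency: it makes clear that the result holds for any discrete-time $D/G/1$ queue with unit jumps, independent of the service-time law, and your careful treatment of batched departures at a single boundary (each recorded separately in $\{L_n^+\}$ because $T_{n+1}=0$ forces $L_{n+1}^+=L_n^+-1$) closes the only nontrivial gap. The paper's approach, while longer, is self-contained in that it actually builds $V(z)$ from first principles rather than appealing to a balance identity, and its renewal machinery would remain useful if one wanted related quantities (e.g.\ the age or residual-time distributions) that level crossing alone does not supply.
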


\begin{proof}
    Denote the limit distribution of the queue length process at arbitrary epochs as $v_i=\lim_{n\rightarrow\infty}\Pr\{Q_n^+=i\}$ for $i\geq0$. Then Theorem \ref{th:sta_distri_arbtr} will be proved if $v_i=\pi_i$ or $V(z)=L^+(z)$.

    Let $\widehat{L}_n$ be the number of packets in the buffer at the last departure epoch before $n^+$. It stays  unchanged at every boundary points of blocks between two adjacent departure epochs, no matter how much packets arrives during this period. Define $u_i=\lim_{n\rightarrow\infty}\Pr\{\widetilde{L}_n=i\}$.

    Let $\Delta \tau_i$ be the sojourn time that $\widetilde{L}_n$ stays at state $i$ and $m_i=\textsf{E}[\Delta \tau_i]$. When the buffer is non-empty, the sojourn time equals to the service time of the head of line packet in the buffer. However, when the buffer is empty, it must wait one block for the arrival of the next packet first. So we have {\small
    \begin{equation}\nonumber
        \Delta \tau_i=\left\{ \begin{aligned}
             &1+T & i=0\\
             &T &   i\geq1
             \end{aligned} \right.~\mbox{and}~
        m_i=\left\{ \begin{aligned}
             &1+\theta &i=0\\
             &\theta &i\geq1.
             \end{aligned} \right.
    \end{equation} }

    Then the average stationary sojourn time of $\{\widetilde{L}_n,n\geq0\}$ is $\overline{m}=\sum_{j=0}^\infty \pi_j m_j=1$. In this sense, $\frac{\pi_i m_i}{\overline{m}}$ is a probability distribution and $u_i=\frac{\pi_i m_i}{\overline{m}}=\pi_i m_i$ holds according to the theory of renewal process.

    Define $\eta_k$ as the elapsed sojourn time at state $k$ until $n^+$. Since $m_k=\textsf{E}[\Delta\tau_k]=\sum_{l=0}^\infty \Pr\{\Delta\tau_k>l\}$, we know that $\Pr\{\eta_k=l\}=\frac{1}{m_k}\Pr\{\Delta \tau_k>l\}$  is a distribution law, which equals to $\frac{1}{1+\theta}\Pr\{T>l-1\} ~\mbox{for}~ k=0$
    and {    $\frac{1}{\theta}\Pr\{T>l\} ~\mbox{for}~ k\geq1$}

    Assume there are $k$ packets left in the buffer after the departure of the last packet before $n^+$, i.e., $\widetilde{L}_n=k$. Recall that $Q_n^+$ is the queue length at time $n^+$. Then $Q_n^+=j$ means that there are $j-k$ packets arrived during $\eta_k$. We have {\small
    \begin{equation}\nonumber
    \begin{split}
            v_j=&\sum_{k=0}^j u_k \Pr\{\eta_k=j-k\}\\
            =&\pi_0 \Pr\{T>j-1\}+\sum_{k=1}^j \pi_k \Pr\{T>j-k\}.
    \end{split}
    \end{equation} }
    Particularly, $v_0=\frac{u_0}{m_0}\Pr\{\Delta\tau_0>0\}=\pi_0$.

    Multiplying $z^j$ on the both sides and taking the sum, one can get{\small
    \begin{equation}\nonumber
        \begin{split}
            V(z)&=\pi_0+\sum_{j=1}^\infty \pi_0\Pr\{T>j-1\} z^j + \sum_{j=1}^\infty z^j \sum_{k=1}^j \pi_k \Pr\{T>j-k\}\\
            &\stackrel{(a)}{=}\pi_0+\pi_0 z\frac{1-G(z)}{1-z} + [L^+(z)-\pi_0]\frac{1-G(z)}{1-z}\\
            &=\frac{\pi_0G(z)(1-z)}{G(z)-z}
            =L^+(z),
        \end{split}
    \end{equation} }
    where (a) follows \textit{Lemma} \ref{lem:pro_pgf}. This completes the proof.

\end{proof}

\subsection{Stationary Queue Length Distribution}
The stationary distribution of the queue length process can be obtained from its PGF (\ref{rt:sta_distri_dep}), \textit{Theorem} \ref{th:sta_distri_dep}. For $k=0$, we know that $\pi_0=1-\theta$.

Before the following discussion, let's introduce one lemma that will be used.
\begin{lemma}\label{lem:causys_highorder}
    \textit{Cauchy Integral Formula} (\textit{extended}) \cite{Xijiao-Fubian}. Let $C$ be a simple closed positively oriented piecewise smooth curve on a domain $D$, and let the function $f(z)$ be analytic in a neighborhood of $C$ and its interior. Then for every $z_0$ in the interior of $C$ and every natural number $n$, we have that $f^{(n)}(z)$ is n-times differentiable at $z_0$ and its derivative is{\small
    \begin{equation}\label{eq:cauchy's_inti}
        f^{(n)}(z_0)=\frac{n!}{2\pi i}\oint_C \frac{f(z)}{(z-z_0)^{n+1}} dz,~n=1,2,\cdots.
    \end{equation} }
\end{lemma}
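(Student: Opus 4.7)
The plan is to prove the formula by induction on $n$, bootstrapping from the classical Cauchy Integral Formula (the $n=0$ case) by repeatedly differentiating under the contour integral with respect to the parameter $z_0$. The base case $f(z_0)=\frac{1}{2\pi i}\oint_C \frac{f(z)}{z-z_0}\,dz$ is the standard Cauchy Integral Formula, which I would take as given (it follows from Cauchy--Goursat applied to $\frac{f(z)-f(z_0)}{z-z_0}$ on an annular region between $C$ and a small circle around $z_0$, using analyticity of $f$).

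For the inductive step, assume the formula holds for some $n\geq 0$. I would form the difference quotient $\frac{f^{(n)}(z_0+h)-f^{(n)}(z_0)}{h}$, substitute the inductive hypothesis on both terms, and combine under a single integral sign. The integrand then contains the factor
\begin{equation}\nonumber
\frac{1}{h}\left(\frac{1}{(z-z_0-h)^{n+1}}-\frac{1}{(z-z_0)^{n+1}}\right),
\end{equation}
whose pointwise limit as $h\to 0$ equals $\frac{n+1}{(z-z_0)^{n+2}}$. Multiplying by $n!/(2\pi i)$ and recognizing $(n+1)!=(n+1)\cdot n!$ yields exactly the desired formula for $n+1$.

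The main obstacle, and the only non-trivial point, is justifying the interchange of the limit $h\to 0$ with the contour integral. The plan is to expand $\frac{1}{(z-z_0-h)^{n+1}}$ as a geometric-type series in $h$ and isolate the remainder after the linear term; that remainder should be bounded by $C|h|^2$ uniformly in $z\in C$, where the constant $C$ depends on $n$ and on $\delta:=\mathrm{dist}(z_0,C)>0$. Here compactness of $C$ is essential on two counts: it guarantees $\delta>0$ (since $z_0$ lies in the interior of $C$), and it gives a uniform bound $|f(z)|\leq M$ on $C$ by continuity. Combining these estimates, the difference between the difference quotient and $\frac{(n+1)!}{2\pi i}\oint_C \frac{f(z)}{(z-z_0)^{n+2}}\,dz$ is $O(h)$, so the limit exists and equals the claimed expression. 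This simultaneously proves that $f^{(n)}$ is differentiable at $z_0$ and identifies $f^{(n+1)}(z_0)$, closing the induction.
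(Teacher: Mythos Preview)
Your argument is correct and is the standard textbook proof of the generalized Cauchy Integral Formula: induction on $n$ from the $n=0$ case, passing the limit of the difference quotient through the contour integral via a uniform $O(|h|)$ remainder estimate that relies on $\mathrm{dist}(z_0,C)>0$ and $\sup_{z\in C}|f(z)|<\infty$, both consequences of the compactness of $C$. There is nothing to compare against, however: the paper does not prove this lemma at all but merely quotes it from a complex-analysis text (reference \cite{Xijiao-Fubian}) and then applies it to extract the Taylor coefficients $\pi_k$ of $L^+(z)$. So your proposal supplies a proof where the paper simply cites one; the approach you give is exactly what one would find in the cited source.
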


Let $C$ and $C'$ are both circles centered on the origin with their radiuses $r<1$ and $1<r'<\frac{-1}{\theta}W_{-1}(-\theta e^{-\theta})$.

Since $L^+(z)$ is convergent within the unit circle, for $k \geq 1$, the stationary can be expressed by{\small
\begin{equation}\label{dr:pi_infty}\nonumber
    \pi_k=\frac{1}{2\pi i}\oint_{C} L^+(z) \frac{dz}{z^{k+1}}
    =\frac{1-\theta}{2\pi i} \oint_{C} \frac{1-z}{1-ze^{\theta(1-z)}} \frac{dz}{z^{k+1}}.
\end{equation} }

It can be seen that function $g(z)=\frac{1-z}{1-ze^{\theta(1-z)}} \frac{1}{z^{k+1}}$ has two singular points, namely $z=1$ and $z=\frac{-1}{\theta}\textsf{W}_{-1}(-\theta e^{-\theta})$, where $\textsf{W}_{-1}(z)$ is the lower branch Lambert W function. Particularly, $z=1$ is one movable singularity since $\lim_{z\rightarrow1}g(z)=1$ is finite. Therefore, within circle $C'$, $g(z)$ can be considered as analytic. According to Cauchy-Goursat's theory, the integral of $g(z)$ along any closed curve in $C'$ is a invariable. Then we have {\small
\begin{equation}\label{dr:pi_infty1}\nonumber
\begin{split}
    \pi_k=&\frac{1-\theta}{2\pi i} \oint_{C'} \frac{(z-1)\frac{e^{\theta(z-1)}}{z}  }{\frac{e^{\theta(z-1)}}{z}-1} \frac{dz}{z^{k+1}}\\
    \stackrel{(a)}{=}&\frac{1-\theta}{2\pi i} \oint_{C'} (z-1)\frac{e^{\theta(z-1)}}{z} \sum_{j=0}^\infty \left(\frac{e^{\theta(z-1)}}{z}\right)^j \frac{dz}{z^{k+1}}\\
    \stackrel{(b)}{=}&(1-\theta) \sum_{j=1}^\infty \left[ \frac{1}{(k+j-1)!}(j\theta)^{k+j-1} - \frac{1}{(k+j)!}(j\theta)^{k+j}\right] e^{-j\theta}
\end{split}
\end{equation} }
where (a) follows $\frac{1}{1-z}=\sum_{j=0}^\infty z^j$ for $|z|<1$ and $\frac{e^{\theta(z-1)}}{z}<1$ on circle $C'$, (b) follows \textit{Lemma} \ref{lem:causys_highorder} and $e^{j\theta(z-1)}$ is analytic.

Define $\varphi_{-1}=1$ and {\small
\begin{equation}\label{rt:fai_0}\nonumber
    \varphi_k=(1-\theta)\sum_{j=1}^\infty \frac{1}{(k+j)!}(j\theta)^{k+j} e^{-j\theta},
\end{equation} }
for $k\geq0$, the stationary queue length distribution turns to be {\small
\begin{equation}\label{rt:pij}\nonumber
    \pi_k=\varphi_{k-1}-\varphi_k.
\end{equation} }

\section{Packet Delay}\label{sec:4}

The \textit{packet delay} $D$ is defined as the time interval between the arrival of a packet and its departure. Firstly, the packet delay consists of a service time $T$. If the packet arrives seeing a non-empty buffer, it must wait for a \textit{waiting time} $W$ for its service. Finally, for the formulation in this paper, there is another piece of time that the packet spends in the system, i.e., the \textit{vestige time} $V$. In this paper, $T=k$ means that the service of a packet is not finished until the $k+1$-th block. According to the memoryless property of negative exponential distribution, although part of the service ability of the $k+1$-th block has been consumed, it is considered as a brand new block. However, the packet still has to spend a part of that block in the system, which is called the \textit{vestige time}. Thus, we know that{
\begin{equation}\label{df:delay}
    D=T+W+V.
\end{equation} }
\begin{theorem}\label{th:delay}
    In the FIFO discipline, the average packet delay is given by
        \begin{equation}\label{rt:E_D}\nonumber
            \textsf{E}[D]=\frac{1}{2}+\theta+\frac{\theta^2}{2(1-\theta)}+\int_0^1 (x-1) e^{\frac{-\theta}{x}} dx.
        \end{equation}
\end{theorem}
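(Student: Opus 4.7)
The plan is to evaluate each of the three pieces in the decomposition (\ref{df:delay}) separately and sum. The service-time contribution is immediate: from $G(z)=e^{\theta(z-1)}$ in (\ref{dr:pgf_t}) we get $\textsf{E}[T]=G'(1)=\theta$.

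For the waiting time I would invoke Little's law on the virtual-block time axis produced by the memoryless reduction. A packet arriving at $n^-$ departs at $\tau_n^+$, and on this axis the span $\tau_n-n$ equals $W+T$ virtual blocks (the packet waits $W$ blocks and is then served for $T$ more). Since arrivals occur at rate one per block and, by Theorem~\ref{th:sta_distri_arbtr}, the mean queue length at an arbitrary epoch equals $\textsf{E}[L^+]$, Little's law yields $\textsf{E}[W]+\textsf{E}[T]=\textsf{E}[L^+]$. Evaluating $\textsf{E}[L^+]$ from Theorem~\ref{th:sta_distri_dep}, either by applying L'H\^opital's rule twice or by Taylor-expanding the numerator and denominator of $L^+(z)$ around $z=1$ through second order, gives $\textsf{E}[L^+]=\theta(2-\theta)/(2(1-\theta))$, so $\textsf{E}[W]=\theta^2/(2(1-\theta))$.

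The substantive work is the vestige time. Given $T=k$, the pair $(S_k,s_{k+1})=(x,y)$ is constrained to $x\le L_p$ and $y>L_p-x$, and the fraction of the $(k{+}1)$-st block actually consumed is $V=(L_p-x)/y$. Unconditioning with the joint density $f_{S_k}(x)f_s(y)$ gives
\[
\textsf{E}[V]=\sum_{k=0}^\infty\int_0^{L_p}\!\!\int_{L_p-x}^\infty \frac{L_p-x}{y}\,f_{S_k}(x)\,f_s(y)\,dy\,dx,
\]
with the $k=0$ term a point contribution at $x=0$. The inner $y$-integral produces $(L_p-x)E_1((L_p-x)/\nu)/\nu$, where $E_1(t)=\int_t^\infty e^{-s}s^{-1}ds$, and for $k\ge 1$ the power series $\sum_{k\ge 1}x^{k-1}/((k-1)!\nu^k)=e^{x/\nu}/\nu$ cancels the $e^{-x/\nu}$ factor in $f_{S_k}$. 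Together with the $k=0$ contribution $\theta E_1(\theta)$, the entire double expression collapses to $\textsf{E}[V]=\theta E_1(\theta)+\int_0^\theta u\,E_1(u)\,du$.

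The main obstacle is matching this to $\tfrac12+\int_0^1(x-1)e^{-\theta/x}dx$. I would integrate $\int_0^\theta uE_1(u)du$ by parts with $w=E_1(u)$ and $dv=u\,du$, using $E_1'(u)=-e^{-u}/u$, obtaining $\tfrac{\theta^2}{2}E_1(\theta)+\tfrac{1-(1+\theta)e^{-\theta}}{2}$; hence $\textsf{E}[V]=\tfrac12+\theta E_1(\theta)+\tfrac{\theta^2}{2}E_1(\theta)-\tfrac{(1+\theta)e^{-\theta}}{2}$. Separately, the substitution $u=\theta/x$ in $\int_0^1(x-1)e^{-\theta/x}dx$ rewrites it as $\theta^2\int_\theta^\infty e^{-u}u^{-3}du-\theta\int_\theta^\infty e^{-u}u^{-2}du$, and one integration by parts on each piece reduces them to combinations of $E_1(\theta)$ and $e^{-\theta}/\theta$ that match the preceding expression. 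Summing $\textsf{E}[T]+\textsf{E}[W]+\textsf{E}[V]$ then yields the formula of Theorem~\ref{th:delay}.
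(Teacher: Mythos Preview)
Your argument is correct and follows the same three-part decomposition as the paper: $\textsf{E}[T]=\theta$ from the PGF, $\textsf{E}[T+W]=\textsf{E}[L^+]=\theta(2-\theta)/(2(1-\theta))$ via Little's law, and then a computation of $\textsf{E}[V]$ by conditioning on $T=k$.

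The only real difference is how the vestige expectation is carried out. The paper defines a chain of conditional variables $U_k^+=(L_p-S_k)\mid S_k<L_p$, $V_k=U_k^+/s_{k+1}$, $V_k^-=V_k\mid V_k<1$, works out their CDFs, and then evaluates $\sum_k \textsf{E}[V_k^-]\Pr\{T=k\}$ while keeping the $\int_0^1 e^{-\theta/x}\,dx$ integral intact throughout, so the target form $\tfrac12+\int_0^1(x-1)e^{-\theta/x}dx$ appears directly. You instead write $\textsf{E}[V;\,T=k]$ as a joint integral over $(S_k,s_{k+1})$, recognize the inner integral as $\tfrac{L_p-x}{\nu}E_1\!\big(\tfrac{L_p-x}{\nu}\big)$, and exploit $\sum_{k\ge1}f_{S_k}(x)=1/\nu$ to collapse everything to $\theta E_1(\theta)+\int_0^\theta uE_1(u)\,du$. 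Your route is more compact and makes the telescoping over $k$ transparent, at the price of a final matching step (two integrations by parts and the substitution $u=\theta/x$) to recover the integral form stated in the theorem; the paper's route is more laborious but lands on the stated expression without that detour.
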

\begin{proof}
    Since there is only one packet arrives in each block, the average time that a packet stays in the system ($S+W$) equals to the average queue length by the Little's law. So we have {
    \begin{equation}\label{dr:sw}
        \textsf{E}[S+W]=\textsf{E}[L^+]=\lim_{z\rightarrow1}L^+(z)=\frac{\theta(2-\theta)}{2(1-\theta)}.
    \end{equation}}

    To investigate the vestige time $V$, we have to define some new random variables. Firstly, for $k\geq1$, define {
    \begin{equation}\nonumber
        \begin{split}
            &U_k=L_p-S_k,~~~U_k^+=U_k|_{U_k>0},\\
            &V_k=\frac{U_k^+}{s_n},~~~~~~~~V_k^-=V_k|_{V_k<1},
        \end{split}
    \end{equation} }
    where $s_n$ is the service provided by the channel in one block and $S_k$ is the total amount of service provided by $k$ successive blocks.
    By the definition, $U_k$ is the remaining part of packet after $k$ blocks of transmission, if it is positive. Using a positive condition, we get $U_k^+$. Assuming the channel service of the next block is $s_n$, the remaining amount of data $U_k^+$ needs $V_k$ blocks to be transmitted. However, $V_k$ is the vestige time only if the remaining data can be transmitted within this block, which turns to $V_k^+$.

    As we can see, $U_k\in(-\infty,L_p)$, $U_k^+\in(0,L_p)$, $V_k\in(0,\infty)$ and $V_k^+\in(0,1)$. Particularly, the CDF of $U_k$ is {
    \begin{equation}\nonumber
        \begin{split}
        F_{U_k}(x)=&\Pr\{U_k\leq x\}=\Pr\{S_k\geq L_p-x\}\\
        =&\frac{1}{\Gamma(k)}\int_{\frac{L_p-x}{\nu}}^\infty t^{k-1}e^{-t} dt,
        \end{split}
    \end{equation} }
    where $\Gamma(k,x)=\frac{1}{\Gamma(k)}\int_x^\infty t^{k-1}e^{-t} dt$ is the upper Gamma function and $\nu=WT_B\rho$.

    Therefore, the CDF of $U^+$ is{
    \begin{equation}\nonumber
        \begin{split}
            F_{U_k^+}(x)=&\Pr\{U_k^+\leq x|U_k^+>0\}\\
            =&\frac{\int_{\frac{L_p-x}{\nu}}^\infty t^{k-1} e^{-t} dt - \Gamma(k, \theta)}   {\gamma(k,\theta )},
        \end{split}
    \end{equation} }
    where $\gamma(k,x)=\frac{1}{\Gamma(k)}\int_0^x t^{k-1}e^{-t} dt$ is the lower Gamma function.


    Since $U_k^+$ and the channel service of the next block ($s_n$) are random variables independent from each other, the CDF of $V_k$ can be derived as follows.{\small
    \begin{equation}\nonumber
        \begin{split}
        F_{V_k}(x)=&\Pr\{V_k\leq x \}=\Pr\{U_k^+\leq xs_n \}\\
        =&\iint_D dF_{U_k^+}(u)dF_s(s)\\
        =&\frac{1}{\nu^k\gamma(k,\theta)} \int_0^{L_p} (L_p-u)^{k-1}e^{-\frac{1}{\nu}{(L_p-u+\frac{u}{x})}} du.
        \end{split}
    \end{equation}}

    Next, the CDF of $V_k^-$ is{
    \begin{equation}\nonumber
        \begin{split}
            F_{V_k^-}(x)=&\Pr\{V_k^-\leq x|V_k<1\}=\frac{\Pr\{V_k<x\}}{\Pr\{V_k<1\}}\\
            =&\frac{k}{L_p}\int_0^{L_p} \left(\frac{L_p-u}{L_p}\right)^{k-1}e^{\frac{u}{\nu}{(1-\frac{1}{x})}} du,
        \end{split}
    \end{equation}}
    for $x\in(0,1)$ and the average of $V_k^-$ will be {
    \begin{equation}\label{dr:E_xk_}\nonumber
        \textsf{E}[V_k^-]=1-\frac{k}{L_p}\int_0^1 \int_0^{L_p} \left(\frac{L_p-u}{L_p}\right)^{k-1}e^{\frac{u}{\nu}{(1-\frac{1}{x})}} dxdu,
    \end{equation} }

    For the case of $k=0$, define $V_0=\frac{L_p}{s_n}$ and $V_0^-=V_0|_{V_0<1}$. Specifically, if one packet is completed within one block, the vestige time equals to its actual service time. The CDF of $V_0$ and $V_0^-$ are, respectively{
    \begin{eqnarray*}
      F_{V_0}(x)&=&\Pr\{V_0\leq x\}=e^{-\frac{\theta}{x}}, ~x\in(0,+\infty) \\
      F_{V_0^-}(x)&=&\Pr\{V_0\leq x|V_0<1\}
            =e^{\theta\left(1-\frac{1}{x}\right)}, ~x\in(0,1).
    \end{eqnarray*} }

    Then we can get the expected value of $V_0^-$ {
    \begin{equation}\label{dr:E_V0}\nonumber
       \textsf{E}[V_0^-]=1-e^\theta\int_0^1 e^{-\frac{\theta}{x}}dx.
    \end{equation} }

    Finally, by the whole probability formula, the expected value of vestige time $V$ is {
    \begin{equation}\label{dr:E_V}\nonumber
           \textsf{E}[V]=\textsf{E}[V_0^-]\Pr\{T=0\}+\sum_{k=1}^\infty \textsf{E}[V_k^-]\Pr\{T=k\},
    \end{equation} }
    where {\small
    \begin{equation}\label{dr:E_Vremaining}\nonumber
        \begin{split}
            &\sum_{k=1}^\infty \textsf{E}[V_k^-]\Pr\{T_n=k\}\\
            =&1-e^{-\theta}- \frac{1}{L_p} \int_0^1\int_0^{L_p} \sum_{k=1}^\infty \left(\theta \frac{L_p-u}{L_p}\right)^{k-1}\theta \frac{k}{k!}e^{-\theta} e^{\frac{u}{\nu}{(1-\frac{1}{x})}} dxdu\\
            =&\frac{1}{2}-e^{-\theta}+\int_0^1 x e^{\frac{-\theta}{x}} dx.
        \end{split}
    \end{equation} }

    With this result, we have {
    \begin{equation}\label{rt:E_V}\nonumber
           \textsf{E}[V]=\frac{1}{2}+\int_0^1 (x-1) e^{\frac{-\theta}{x}} dx
    \end{equation} }
    and will complete the proof by combing (\ref{df:delay}) and (\ref{dr:sw}).
\end{proof}

\section{Numerical Results}\label{sec:5}
In this section, we provide some numerical results to illustrate the stationary distribution and packet delay for the infinite buffer model.
The component variance is assumed as $\sigma^2=1$, the system bandwidth is $5$ KHz and transmitting power is $-10$ dBW. Suppose that the distance between the transmitter and the receiver is $1000$ m and the pathloss exponent is $4$. The block length is chosen as $T_B=10^{-4}$ s. According to the definition, we have $\theta=\frac{L_p}{WT_B\rho}$.

\begin{figure}[h]
\centering
\includegraphics[width=3in]{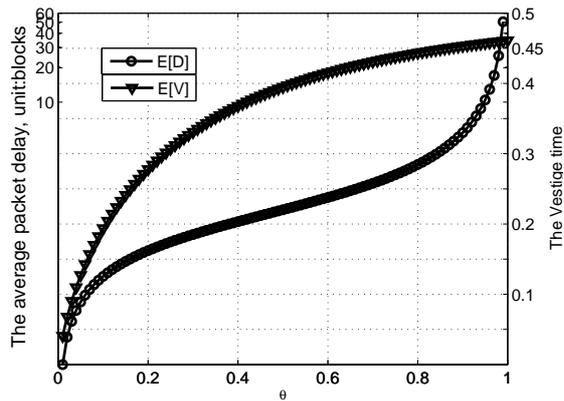}
\caption{The average packet delay of the infinite-buffer model. E[D] corresponds to the log-scale y-axis on the left while E[V] corresponds to the linear y-axis on the right.} \label{fig:ED}
\end{figure}

Besides, we have $L_p=RTB$ and the equivalent AWGN capacity of the channel is $C_a=W\rho$ in the low SNR region. Therefore, $\theta$ can also be seen as the ratio between traffic rate and AWGN capacity.

It is seen in Fig. \ref{fig:ED} that the average delay grows quickly with $\theta$. Particularly, the average delay corresponds to the log-scale y-axis on the left and the average vestige time corresponds to the linear y-axis on the right, both in blocks. It is seen that the average vestige time $\textsf{E}[V]$ is also increasing with $\theta$ but never exceeds 0.5.

The stationary queue length distribution of the infinite-buffer model is shown in Fig. \ref{fig:piinf} for $\theta=0.2,0.5,0.8$, where they are presented in both linear and log-scale y-axis. It is clear that the larger $\theta$ is, the slower $\pi_k$ decreases. This is easy to understand because heavier traffic leads to longer queues. Most importantly, as the queue length  grows, the probability that it occurs decreases approximately exponentially.

\section{Conclusion}\label{sec:6}

Modern communications requires wireless channels to provide QoS  guaranteed services. How to characterize and make full use of the service capability of fading channels is an urgent problem. In this paper, we studied the problem for low-SNR block Rayleigh fading channels by using the memoryless property of block services. However, for general fading channels, the answers are quite unclear, which needs a lot of further efforts.

\section*{Acknowledgement}
This work was supported by the China Major
State Basic Research Development Program (973 Program) No.
2012CB316100(2), National Natural Science Foundation of China
(NSFC) No. 61171064, NSFC No. 61021001 and the open research fund of National Mobile Communication Research Laboratory, Southeast University, China.

\begin{figure}[h]
\centering
\includegraphics[width=3in]{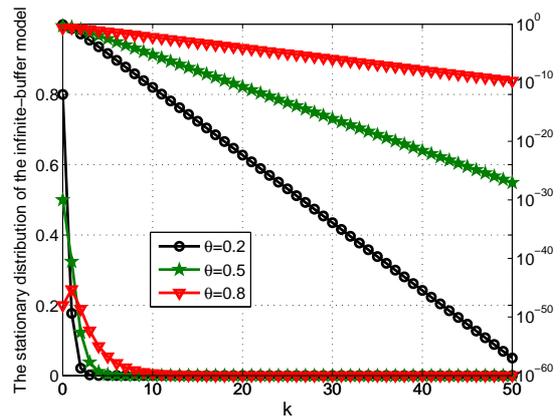}
\caption{The stationary queue length distribution of the infinite-buffer model. Each curve is presented both in linear y-axis on the left and log-scale y-axis on the right.} \label{fig:piinf}
\end{figure}

\bibliographystyle{IEEEtran}

\end{document}